\newtheorem{theorem}{Theorem}
\newtheorem{definition}{Definition}
\begin{document}

\title{Hypergraph Spectral Clustering for Point Cloud Segmentation}

\author{Songyang Zhang, Shuguang Cui, \IEEEmembership{Fellow, IEEE} and Zhi Ding, \IEEEmembership{Fellow, IEEE}
\thanks{S. Zhang and Z. Ding are with the Department of Electrical and Computer
	Engineering, University of California at Davis, Davis, CA 95616 USA (e-mail:
	sydzhang@ucdavis.edu; zding@ucdavis.edu).}
\thanks{S. Cui is currently with the Chinese University of Hong Kong, Shenzhen, China, 518172 (e-mail: shuguangcui@cuhk.edu.cn).}\vspace{-2em}
}

\maketitle

\begin{abstract}
Hypergraph spectral analysis has emerged as an
effective tool processing complex data structures in data
analysis. The surface of a {three-dimensional (3D)} point cloud and the multilateral 
relationship among their points can be naturally captured by the 
high-dimensional hyperedges. 
This work investigates the power of hypergraph spectral
analysis in unsupervised segmentation of 3D point clouds.
We estimate and order the hypergraph spectrum from observed 
point cloud coordinates.
By trimming the
redundancy from the estimated hypergraph spectral space 
based on spectral component strengths, we develop
a clustering-based segmentation method. 
We apply the proposed method to various point clouds, 
and analyze their respective spectral properties. Our experimental
results demonstrate the effectiveness 
and efficiency of the proposed segmentation method.
\end{abstract}

\begin{IEEEkeywords}
Hypergraph, point cloud, signal processing, spectral clustering,
segmentation.
\end{IEEEkeywords}

\IEEEpeerreviewmaketitle

\section{Introduction}

\IEEEPARstart{W}{ith} the proliferation of virtual reality (VR) and 
augmented reality (AR), 
three-dimensional (3D) point clouds have been widely
adopted as an efficient representation for 3D objects and their
surroundings in many applications \cite{c1}. 
One common processing task on 3D point clouds 
is unsupervised segmentation. The
goal of point cloud segmentation is to identify points in
a cloud with similar features for clustering 
into their respective regions \cite{c2}. 
These partitioned regions should be physically 
meaningful. Practical examples include the work of \cite{c3} 
which segments human posture point clouds for behavior analysis
by partitioning human bodies into different semantic body parts.
Segmentation facilitates point cloud analysis in various applications, 
such as object tracking, object
classification, feature extraction, and feature detection.

Among a myriad of methods proposed to segment point clouds, 
spectral clustering is commonly used in unsupervised scenarios \cite{c4,c5}.
By first modeling point clouds as graphs, one can derive
graph spectral space to apply spectral clustering 
in segmentation \cite{c6}. In \cite{c3}, the authors further applied
surface normals in spectral clustering to segment human bodies. 
Other graph-based spectral clustering methods include
graph partitioning \cite{c7} and spanning trees \cite{c8}.
Generally, these graph spectral clustering
methods can achieve good results and require less prior-konwledge 
on the datasets \cite{c2,c9}. 
Moreover, one can establish a natural connection between the point cloud features 
and the corresponding graph structure 
for analysis \cite{c10}. 
{However, despite their successes, graph-based methods 
still exhibit shortcomings, such as the low construction efficiency of graph models. 
Traditional graphs are constructed according to distance or similarity between data samples, 
e.g., using Guassian kernels \cite{gg1}. However, these constructions are purely based on observations without
prior knowledge of graph models, making them sensitive to noise with difficulty to tune the hyper-parameters \cite{gg}.
Moreover, even graph models from physical networks could be partially observed, the
construction of a suitable graph remains an open challenge.}
Furthermore, regular graph edges only connect two nodes and, thus, 
can only model pairwise relationships. 
Since surfaces in a point cloud usually contain more than two nodes,
point-to-point graph edges are ill-equipped to model such more complex
multilateral relationships. For these reasons, 
we are motivated to develop more general hypergraph models.

Hypergraphs are high-dimensional generalization of graphs.
Unlike edges in graphs that can only model the pairwise relationship of
two nodes, each hyperedge in a hypergraph connects more than two nodes.
The high-dimensionality of hyperedges can directly characterize 
the multilateral relationship of multiple points in a point cloud.
Thus, hypergraph can conveniently model point clouds. 
Moreover, motivated by graph signal processing (GSP) \cite{c11}, 
hypergraph-based signal processing tools can provide 
novel alternative definitions of hypergraph spectral space for 
spectral clustering \cite{c12,c13}. More specifically, in \cite{c12}, 
a hypergraph is constructed according to distances and the hypergraph 
spectrum is derived with tensor decomposition. However, such 
distance-based hypergraph construction is deficient
in measuring efficiency, and the tensor decomposition can also be time-consuming.

To overcome the aforementioned shortcomings, we 
propose a novel method of spectral clustering segmentation 
based on the hypergraph signal processing (HGSP) \cite{c12} 
for the gray-scale point clouds. 
We first 
propose to estimate the spectral components based on the hypergraph stationary 
process before ordering the components in accordance to their 
frequency coefficients. Removing information redundancy based on
spectrum order, a spectral clustering can be implemented on 
key spectrum components for point cloud segmentation. 
We test the proposed segmentation method on 
multiple gray-scale point cloud datasets to validate its effectiveness 
and efficiency. 
{Note that, since point cloud segmentation only requires a general order of the hypergraph spectrum to extract key information, it is insensitive to the noise. We approximate the frequency coefficients to reduce complexity while bypassing explicit denosing. Furthermore, experimental results validate the robustness of the proposed method in a noisy environment. However, many other applications may require explicit calculation of frequency coefficients and must overcome the noise effects. For these problems with different objectives, we present different methodologies in \cite{c18}}.

\section{Preliminary}
This section presents a brief overview on the fundamentals of HGSP \cite{c12}.
Within the HGSP framework, a hypergraph with $N$ nodes and the 
largest number of nodes $M$ connected by any hyperedge, is represented by an $M$th-order $N$-dimensional adjacency tensor $\mathbf{A}\in\mathbb{R}^{\underbrace{\scriptstyle{N\times N\times\cdots\times N}}_{\text{M times}}}$, which can be also decomposed via orthogonal {CANDECOMP/PARAFAC (CP)} decomposition \cite{c15}-\cite{c99}, i.e.,
$\mathbf{A}=(a_{i_1i_2\cdots i_M})\approx\sum_{r=1}^{N}\lambda_r\cdot\underbrace{\mathbf{f}_r\circ...\circ \mathbf{f}_r}_{\text{M times}}$,
where $\circ$ is the tensor outer product \cite{c12}, $\mathbf{f}_r$'s are orthonormal bases called spectral components, and $\lambda_r$'s are frequency coefficients related to hypergraph frequency. Note that the hyperedges with fewer nodes than $M$ are normalized with weights as 
described in \cite{c12,c14}.
With the definition of hypergraph spectrum, a supporting matrix 
$\mathbf{P_s}=\frac{1}{\lambda_{\max}}\mathbf{V}\bm{\Sigma}\mathbf{V}^T$ can be defined to capture the overall spectral information, where $\mathbf{V}=[\mathbf{f}_1,\cdots,\mathbf{f}_N]$ and $\mathbf{\Sigma}=diag(\lambda_r)\in\mathbb{R}^{N\times N}$. {We refer readers to \cite{c12} for additional discussions and properties of the supporting matrix.}

Because of page limitation, we shall refrain from elaborating in detail
many fundamental aspects of HGSP. Interested readers can find important concepts,
such as hypergraph Fourier transform, HGSP filter design, 
and sampling theory in \cite{c12}.

\section{Segmentation}
Our proposed segmentation method targets gray-scale point clouds
consisting of $N$ points. Such point clouds 
can be represented by $\mathbf{s}=[\mathbf{X_1\quad X_2\quad X_3}]\in\mathbb{R}^{N\times 3}$ 
where $\mathbf{X}_i$ captures the $N$-point positions
in the $i-$th coordinate. There are three stages in the 
proposed segmentation: 1) estimate the hypergraph spectral space, 2) order 
and select the principal hypergraph spectrum, 
and 3) segment via clustering in the reduced hypergraph spectral space. 
In the first stage,  instead of decomposing the constructed hypergraph, 
we estimate the hypergraph spectrum directly from observed point clouds based on the hypergraph stationary process. This approach by-passes explicit hypergraph construction since the representing tensor is memory-inefficient 
and its orthogonal-CP decomposition is time-consuming. 
We then estimate the distribution of hypergraph frequency coefficients 
according to a measure of smoothness, and order the spectrum based on the hypergraph frequency. 
Finally, we identify the low frequency spectral contents and cluster 
in the optimized spectral space.

\subsection{Estimation of Hypergraph Spectral Space}
We begin with the estimation of hypergraph spectrum.
In \cite{c16}, a graph stationary process is defined 
within the GSP framework to describe the stationary property of the graph shifting. 
Moreover, \cite{c17} proposes a method to estimate the graph spectral
components under the assumption of stationarity for the observed dataset. 
{For point cloud datasets, the three coordinates of each point can be interpreted as 
observations of a node from three different viewpoints, 
which reflects the structural information embedded in the spectrum.}
Thus, we can estimate the hypergraph spectrum components based on
hypergraph stationary processing.

{In \cite{c12}, a polynomial filter can be defined with a supporting matrix to capture the spectral information of adjacency tensor}. {Let $\mathbb{E}(\cdot)$ denote expectation and $(\cdot)^H$ denote conjugate transpose.
The hypergraph stationary process is defined as follows.}
\begin{definition}(Weak-Sense Stationary Process)
	A stochastic signal $\mathbf{x}\in\mathbb{R}^N$ is weak-sense stationary (WSS) over 
 hypergraph with supporting matrix 
	$\mathbf{P}_s$ {\textit{iff}} for all integers $\tau\ge 0$, 
	\begin{align}\label{mean}
	\mathbb{E}[\mathbf{x}]&=\mathbb{E}[\mathbf{P}_{\tau}\mathbf{x}]\\
	\label{time}
	\mathbb{E}[(\mathbf{P}_{\tau_1}\mathbf{x})((\mathbf{P}^H)_{\tau_2}\mathbf{x})^H] &=\mathbb{E}[(\mathbf{P}_{\tau_1+\tau}\mathbf{x})((\mathbf{P}^H)_{\tau_2-\tau}\mathbf{x})^H]
	\end{align}
	where $\mathbf{P}=\lambda_{max}\mathbf{P_s}$ and $\mathbf{P}_\tau=\mathbf{P}^{\tau}$.
\end{definition}

Condition (\ref{mean}) requires constant mean for stochastic signals over hypergraph, 
consistent with traditional definition of WSS stochastic processes.
Based on the transpose relationship, $\mathbf{P}^H$ can be interpreted as 
propagation in an opposite direction of $\mathbf{P}$. Hence, condition
(\ref{time}) implies that 
the covariance of stationary signals only depends on the difference 
between two steps, i.e., $\tau_1+\tau_2$.
With the definition of hypergraph stationary process, we have the following property.
\begin{theorem}
	A stochastic signal $\mathbf{x}$ is WSS if and only if it has zero-mean and its covariance 
	matrix has the same eigenvectors as the hypergraph spectrum basis, i.e., 
	$\mathbb{E}[\mathbf{x}] =\mathbf{0}$ and
	$\mathbb{E}[\mathbf{x}\mathbf{x}^H]=\mathbf{V}\Sigma_\mathbf{x}\mathbf{V}^{H}, \label{s2}$
	where $\mathbf{V}$ is the hypergraph spectrum.	
\end{theorem}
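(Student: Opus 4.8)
The plan is to reduce both stationarity conditions to algebraic statements about the correlation matrix $\mathbf{R}=\mathbb{E}[\mathbf{x}\mathbf{x}^H]$ and the real symmetric propagation matrix $\mathbf{P}=\mathbf{V}\Sigma\mathbf{V}^T$, whose spectral decomposition already exhibits the hypergraph basis $\mathbf{V}$ as its eigenvectors. First I would simplify condition (\ref{time}): expanding the first factor as $\mathbf{P}^{\tau_1}\mathbf{x}$ and the conjugate-transposed second factor as $\mathbf{x}^H\mathbf{P}^{\tau_2}$ (the $\mathbf{P}^H$ in the definition is precisely what makes $((\mathbf{P}^H)_{\tau_2}\mathbf{x})^H=\mathbf{x}^H\mathbf{P}^{\tau_2}$), linearity of expectation turns the left side into $\mathbf{P}^{\tau_1}\mathbf{R}\mathbf{P}^{\tau_2}$ and the right side into $\mathbf{P}^{\tau_1+\tau}\mathbf{R}\mathbf{P}^{\tau_2-\tau}$. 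Thus (\ref{time}) is equivalent to $\mathbf{P}^{\tau_1}\mathbf{R}\mathbf{P}^{\tau_2}=\mathbf{P}^{\tau_1+\tau}\mathbf{R}\mathbf{P}^{\tau_2-\tau}$ holding for all admissible exponents.

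For the reverse (``if'') direction, I would start from the two assumed conditions. Zero mean makes (\ref{mean}) immediate, since $\mathbb{E}[\mathbf{P}_\tau\mathbf{x}]=\mathbf{P}^\tau\mathbb{E}[\mathbf{x}]=\mathbf{0}=\mathbb{E}[\mathbf{x}]$. Writing $\mathbf{R}=\mathbf{V}\Sigma_\mathbf{x}\mathbf{V}^H$ and using $\mathbf{V}^T\mathbf{V}=\mathbf{I}$ shows $\mathbf{P}\mathbf{R}=\mathbf{V}\Sigma\Sigma_\mathbf{x}\mathbf{V}^H=\mathbf{V}\Sigma_\mathbf{x}\Sigma\mathbf{V}^H=\mathbf{R}\mathbf{P}$, because diagonal matrices commute; hence $\mathbf{R}$ commutes with every power of $\mathbf{P}$. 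Sliding $\mathbf{P}^\tau$ through $\mathbf{R}$ then gives $\mathbf{P}^{\tau_1}\mathbf{R}\mathbf{P}^{\tau_2}=\mathbf{P}^{\tau_1+\tau}\mathbf{R}\mathbf{P}^{\tau_2-\tau}$, i.e. condition (\ref{time}).

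For the forward (``only if'') direction, I would extract zero mean from (\ref{mean}) by expanding $\mathbb{E}[\mathbf{x}]=\sum_r c_r\mathbf{f}_r$; the requirement $\mathbf{P}^\tau\mathbb{E}[\mathbf{x}]=\mathbb{E}[\mathbf{x}]$ for every $\tau\ge0$ forces $c_r\lambda_r^\tau=c_r$ for all $\tau$, so every coefficient with $\lambda_r\neq1$ vanishes and $\mathbb{E}[\mathbf{x}]=\mathbf{0}$ under the standard normalization in which no frequency coefficient equals one. Then, specializing the simplified (\ref{time}) to $\tau_1=0,\ \tau_2=1,\ \tau=1$ yields the single commutation relation $\mathbf{R}\mathbf{P}=\mathbf{P}\mathbf{R}$. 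Finally, because $\mathbf{P}$ is real symmetric, a matrix commuting with it is diagonalized by the same orthonormal eigenbasis $\mathbf{V}$, giving $\mathbf{R}=\mathbf{V}\Sigma_\mathbf{x}\mathbf{V}^H$.

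I expect the main obstacle to be this last co-diagonalization step together with the zero-mean deduction, both of which are clean only when the frequency coefficients $\lambda_r$ are simple (distinct) and differ from one. If $\mathbf{P}$ has a repeated eigenvalue, a matrix commuting with it need not be diagonal in the particular basis $\mathbf{V}$ but only block-diagonal across the degenerate eigenspaces; I would handle this by restricting to each eigenspace and choosing the eigenbasis there, so that $\Sigma_\mathbf{x}$ is (block) diagonal and the ``same eigenvectors'' claim is read up to the freedom within repeated-frequency subspaces. Likewise, the eigenvalue-one caveat is where I would appeal to the normalization of the supporting matrix to guarantee $\mathbb{E}[\mathbf{x}]=\mathbf{0}$ rather than merely lying in a fixed invariant subspace.
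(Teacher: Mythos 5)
Your proof is correct and follows the same skeleton as the paper's: both reduce condition (\ref{time}) to a commutation relation between the correlation matrix $\mathbf{R}=\mathbb{E}[\mathbf{x}\mathbf{x}^H]$ and the shift $\mathbf{P}=\mathbf{V}\Lambda_\mathbf{P}\mathbf{V}^T$, and then read off co-diagonalizability with the hypergraph basis. The genuine difference lies in how the necessity direction is closed. The paper lets the supporting matrix vary: it argues that (\ref{mean}) and the identity $(\mathbf{V}^H\mathbf{R}\mathbf{V})\Lambda_\mathbf{P}^{\tau}=\Lambda_\mathbf{P}^{\tau}(\mathbf{V}^H\mathbf{R}\mathbf{V})$ must hold for \emph{arbitrary} $\mathbf{P}$, and a matrix commuting with every diagonal $\Lambda_\mathbf{P}$ is itself diagonal; this makes both the zero-mean and diagonality conclusions immediate, but it sits somewhat uneasily with the definition, which fixes a single supporting matrix. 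You instead keep $\mathbf{P}$ fixed, extract the single relation $\mathbf{R}\mathbf{P}=\mathbf{P}\mathbf{R}$ from the choice $\tau_1=0$, $\tau_2=1$, $\tau=1$, and invoke the fact that a Hermitian matrix commuting with a real symmetric matrix of simple spectrum is diagonalized in the same orthonormal eigenbasis. This buys a proof valid for the fixed $\mathbf{P}$ of the definition, at the price of the two degeneracy caveats you correctly flag: repeated frequency coefficients (where only block-diagonality follows) and a coefficient equal to one (where the mean need only lie in the corresponding invariant subspace). The paper's ``arbitrary $\mathbf{P}$'' reading silently absorbs both caveats; under their respective readings of the quantifiers, both arguments are sound.
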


\begin{proof}
	Since the hypergraph spectrum basis are orthonormal, we have $\mathbf{VV}^T=\mathbf{I}$.
	Then, the $\tau$-step shifting based on supporting matrix can be calculated as
	\begin{align}
	\mathbf{P}_\tau&=\underbrace{\mathbf{V}\Lambda_\mathbf{P}\mathbf{V}^T\mathbf{V}\Lambda_\mathbf{P}\mathbf{V}^T\cdots\mathbf{V}\Lambda_\mathbf{P}\mathbf{V}^T}_{\tau\quad times}\\
	&=\mathbf{V}\Lambda_\mathbf{P}^\tau\mathbf{V}^T.\label{poly}
	\end{align}
	
	Now, the Eq. (\ref{mean}) can be written as 
	\begin{equation}
	\mathbb{E}[\mathbf{x}]=\mathbf{V}\Lambda_P^\tau\mathbf{V}^T\mathbb{E}[\mathbf{x}].
	\end{equation}
	Since $\mathbf{V}\Lambda_P^\tau\mathbf{V}^T$ does not always equal to $\mathbf{I}$, Eq. (\ref{mean}) holds for arbitrary supporting matrix and $\tau$ if and only if $\mathbb{E}[\mathbf{x}]=\mathbf{0}$. 
	
	Next we show the sufficiency and necessity of the condition in Eq. (\ref{s2}). 
	The condition in Eq. (\ref{time}) can be written as
	\begin{equation}
	\mathbf{P}_{\tau_1}\mathbb{E}[\mathbf{xx}^H]((\mathbf{P})^H_{\tau_2})^H=\mathbf{P}_{\tau_1+\tau}\mathbb{E}[\mathbf{xx}^H]((\mathbf{P})^H_{\tau_2-\tau})^H.
	\end{equation}
	Considering Eq. (\ref{poly}) and the fact that
	hypergraph spectrum is real, Eq. (\ref{time}) is equivalent to
	\begin{equation}
	\mathbf{V}\Lambda_\mathbf{P}^{\tau_1}\mathbf{V}^H\mathbb{E}[\mathbf{xx}^H]\mathbf{V}\Lambda_\mathbf{P}^{\tau_2}\mathbf{V}^H=\mathbf{V}\Lambda_\mathbf{P}^{\tau_1+\tau}\mathbf{V}^H\mathbb{E}[\mathbf{xx}^H]\mathbf{V}\Lambda_\mathbf{P}^{\tau_2-\tau} \mathbf{V}^H,
	\end{equation}
	which can be written as
	\begin{equation} \label{eig}
	(\mathbf{V}^H\mathbb{E}[\mathbf{xx}^H]\mathbf{V})\Lambda_\mathbf{P}^{\tau}=\Lambda_\mathbf{P}^{\tau}(\mathbf{V}^H\mathbb{E}[\mathbf{xx}^H]\mathbf{V}).
	\end{equation}
	If Eq. (\ref{eig}) holds for arbitrary $\mathbf{P}$, 	$(\mathbf{V}^H\mathbb{E}[\mathbf{xx}^H]\mathbf{V})$ should be diagonal, 
	which indicates $\mathbb{E}[\mathbf{x}\mathbf{x}^H]=\mathbf{V}\Sigma_\mathbf{x}\mathbf{V}^{H}$. 
	Thus, the sufficiency of the condition is proved. 
	
	Similarly, we can apply Eq. (\ref{s2}) on both sides of Eq. ($\ref{time}$), 
	we can establish the necessity of the condition in Eq. (\ref{s2}). 
\end{proof}
This property indicates that, we can estimate the hypergraph 
spectral components from the eigenspace of the covariance matrix. 
Accordingly, a hypergraph-based spectrum estimation can be developed. 
Given a gray-scale point cloud 
$\mathbf{s}=[\mathbf{X_1\quad X_2\quad X_3}]\in\mathbb{R}^{N\times 3}$, 
we can treat each $\mathbf{X}_i$ as an 
observation of the point data and normalize them to zero-mean. 
Assuming signals to be hypergraph WSS, through normalization of observations we can directly obtain the hypergraph 
spectrum from their covariance matrix.

\subsection{Estimation of the Spectrum Distribution}
One important issue in spectral clustering is the ranking 
of the spectral components in order to identify and remove 
some less critical and redundant information. 
Within the framework of HGSP, we rank the spectral components according 
to their
(nonnegative) frequency coefficients in descending order 
to relatively order spectral components from  low frequency to high 
frequency \cite{c12}. 
Clearly, the problem lies in estimating the spectrum distribution. 
In practical applications, large-scale networks are often sparse, 
thereby making it meaningful to infer that most entries of the 
hypergraph representing tensor 
in typical datasets are zero \cite{c19}. 
In addition, signal smoothness is a widely-used assumption 
when estimating the underlying structure of graphs and hypergraphs 
\cite{c20}. We formulate a general estimation of hypergraph coefficients 
as 
\begin{align}
&\min_{\mathbf{\boldsymbol{\lambda}}}\quad  \mbox{Smooth}
(\mathbf{s,\boldsymbol{\lambda}},\mathbf{f}_r)+\beta||\mathbf{A}||_T^2\label{e1}\\
s.t.\quad &\mathbf{A}=\sum_{r=1}^{N}\lambda_r\cdot\underbrace{\mathbf{f}_r\circ...\circ \mathbf{f}_r}_{\text{M times}}, \quad\mathbf{A}\in \mathcal{A}. \label{dec}
\end{align}
Here, $||\mathbf{A}||^2_T=\boldsymbol{\lambda}^T\boldsymbol{\lambda}$ 
is the tensor norm \cite{c18}, and the set $\mathcal{A}$ includes the
prior information on the tensor types, e.g., adjacency or Laplacian. 
{The optimization problem in Eq. (3) is similar to the traditional hypergraph learning framework \cite{c31,c32}, i.e., $\min_f R_{emp}(f)+\mu \Omega(f)$, where $R_{emp}$ is the empirical loss and $\Omega$ is a regularizer on the hypergraph. The major differences lie in the choice of loss functions.} {Specifically, we use a tensor-based
	smoothness function and energy regularizer, whereas many traditional hypergraph learning works typically focus on the matrix-based Laplacian \cite{c33,c34}}.

Instead of the exact calculation of frequency coefficients, 
our 3D point cloud segmentation only requires a 
general idea on the distribution of frequency coefficients. Thus, 
we can simplify the problem as follows. 
First, we limit our tensor order to $M=3$, i.e.,
each hyperedge has girth 3, since 
3 nodes are the required minimum to construct a surface. 
We then use the total variation based on supporting matrix, 
denoted by $\mathbf{TV}(\mathbf{s})=||\mathbf{s}-\mathbf{P_s}\mathbf{s}||^2_2$, 
to describe the smoothness over the estimated hypergraph. 
In addition, we set the first eigenvector in the covariance matrix of observations as the spectrum component corresponding to $\lambda_{\max}$ to maintain the information of the observed signals.

 Let $\boldsymbol{\sigma}=\boldsymbol{\lambda}/{\lambda_{max}}=[\sigma_1\quad \sigma_2 \quad \cdots\quad \sigma_N]^T$. 
 The formulation to estimate the spectrum distribution can be rewritten as 
\begin{equation}\label{target}{
\min_{\boldsymbol{\sigma}}\quad \sum_{i=1}^3||\mathbf{X_i}-\mathbf{P_s}\mathbf{X_i}||^2_2+\beta{ \boldsymbol\sigma^T\boldsymbol\sigma}}
\end{equation}
\begin{align}
s.t. \quad 
&0\leq \sigma_r\leq \sigma_{1}=1;\label{non}\\
&\sum_{r=1}^N \sigma_r f_{r,i_1}f_{r,i_2}f_{r,i_3}\geq 0, \quad  i_1,i_2,i_3=1,\cdots,N. \label{adj}
\end{align}
Note that the constraint (\ref{adj}) indicates that
the tensor $\mathbf{A}$ is an adjacency tensor here, and
can be modified or relaxed for specific applications. The constraint (\ref{non}) is the 
the nonnegative constraints on the factor matrices \cite{c15}.
Thus, the formulation is convex and can be readily solved by using numerical recipes.

\subsection{Segmentation based on Hypergraph Spectral Clustering}\label{f11}
With the estimated spectrum components and frequency coefficients, we can 
directly propose a segmentation method based on spectral clustering. 
The detailed steps are summarized as Alogrithm 1. Usually, 
we can define a threshold in Step 7. We will provide more information
on selecting the leading components in Section \ref{graphm}.

\begin{algorithm}[b]
	\begin{algorithmic}[1] 
		\caption{Hypergraph Spectral Clustering}\label{freestimation}
		\STATE {\bf{Input}}: Point cloud dataset $\mathbf{s}=[\mathbf{X_1,X_2,X_3}]\in\mathbb{R}^{N\times 3}$ and the number of clusters $k$.
		\STATE Calculate the mean of each row in $\mathbf{s}$, i.e., $\mathbf{\overline s}=(\mathbf{X_1+X_2+X_3})/3$;
		\STATE Normalize the original point cloud data to zero-mean in each row, i.e.,
		$\mathbf{s}'=[\mathbf{X_1-{\overline s},X_2-{\overline s},X_3-{\overline s}}]$;
		\STATE Calculate eigenvectors $\{\mathbf{f}_1,\cdots,\mathbf{f}_N\}$ for $R(\mathbf{s}')=\mathbf{s'}(\mathbf{s'}^H)$;
		\STATE Estimate frequency coefficients $\sigma_r$'s by solving Eq. (\ref{target});
		\STATE Rank frequency components $\mathbf{f}_r$'s based on their corresponding frequency coefficients $\sigma_r$ in the decreasing order.
		\STATE Find the first $E$ leading spectral components $\mathbf{f}_r$ with larger $\sigma_r$ and construct a spectrum matrix $\mathbf{M}\in\mathbb{R}^{N\times E}$ with columns as the leading spectrum components.
		\STATE Cluster the rows of $\mathbf{M}$ using $k$-means clustering.
		\STATE Cluster node $i$ into partition $j$ if the $i$th row of $\mathbf{M}$ is assigned to $j$th cluster.
		\STATE {\bf{Output}}: $k$ partitions of the point clouds.
	\end{algorithmic}
\end{algorithm}

\section{Experiments} \label{graphm}
We test the performance of the proposed method 
along with traditional graph-based methods and k-means clustering.

\begin{figure*}[t]
	\centering
	\subfigure[HGSP]{
		\label{mw1}
		\includegraphics[height=7cm]{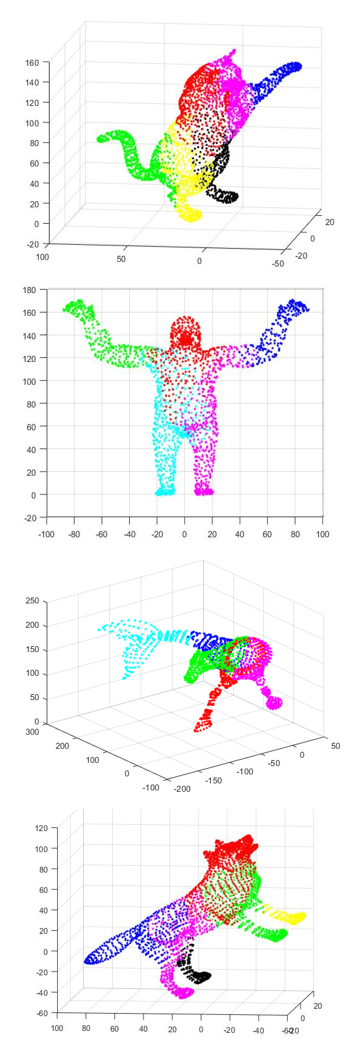}}
	\subfigure[GSP]{
		\label{mw2}
		\includegraphics[height=7cm]{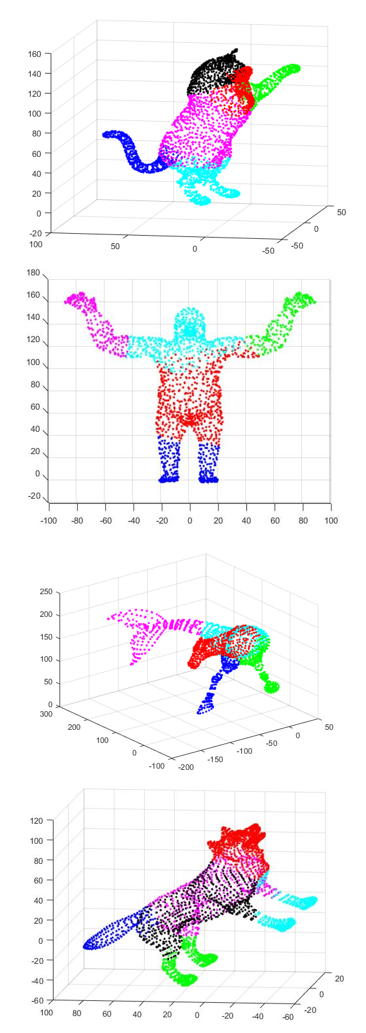}}
	\subfigure[Laplacian]{
		\label{mw3}
		\includegraphics[height=7cm]{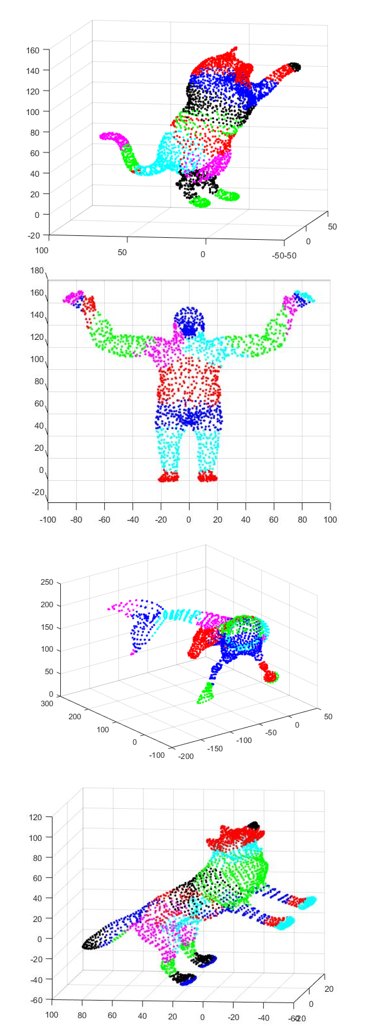}}
	\subfigure[Kmeans]{
		\label{mw4}
		\includegraphics[height=7cm]{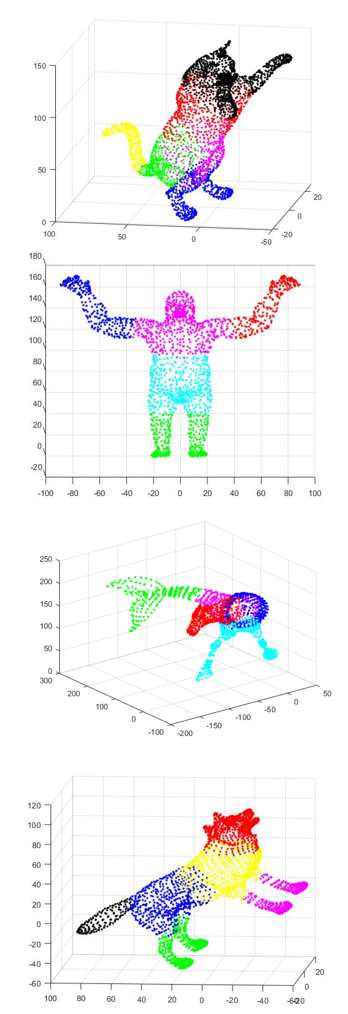}}
	\caption{Results of Segmentation.}
	\label{mw}
\end{figure*}

\textbf{Experiment Setup:} To implement $k$-means, we cluster over each row of the point 
cloud coordinates. For the hypergraph spectral clustering, {we select the first 
$E$ key spectrum components according to frequency coefficients
until a steep drop to the next (i.e., the $(E+1)$-th) 
coefficient.
Typically, the first two or three elements sufficiently satisfy this 
criterion. When optimizing the frequency coefficients, an efficient $\beta$ lies in $0.1$-$10$ depending on the specific datasets.} For the graph-based clustering,
a Gaussian-graph model \cite{c10} is applied
to encode the local geometry information through an adjacency matrix 
$\mathbf{W}\in\mathbb{R}^{N\times N}$. Let $\mathbf{s}_i\in\mathbb{R}^{1\times3}$ 
be the $i$th point coordinate. The edge weight between points $i$ and $j$ 
is calculated as $W_{ij}=\exp\left(-\frac{||\mathbf{s}_i-\mathbf{s}_j||^2_2}{\delta^2}
\right)$ if $\quad||\mathbf{s}_i-\mathbf{s}_j||^2_2\leq t$; otherwise, $W_{ij}=0$.
Here, the variance $\delta$ and the threshold $t$ are parameters to control 
the edge weights.  GSP spectrum is derived from the matrix $\mathbf{W}$. 
We also test the Laplacian matrix $\mathbf{L=D-S}$, where $\mathbf{S}$ is the unweighted adjacency matrix and $\mathbf{D}$ is the diagonal matrix of node degree,

\textbf{Overall Performance}: We first compare different methods in the animal datasets in \cite{c21}-\cite{c24}. The overall results are shown in Fig. \ref{mw}.
The test results show that HGSP-based method, GSP-based method, and k-means clustering 
exhibit similar performance by clustering limbs and torsos. 
Interestingly, our HGSP method can further distinguish tails and different 
legs.
Especially for the gorilla dataset in the second row of Fig. \ref{mw}, 
HGSP spectral clustering segments
different limbs with four different colors, whereas other methods fail to do so.
We can see that the hypergraph model captures the overall structural information of
3D point cloud better than traditional graphs. 
The Laplacian-based method accentuates the details of some complex structures. 
For example, in the gorilla dataset, 
Laplacian-based method further distinguishes feet from legs and hands from 
arms, respectively. 
Generally, HGSP-based spectral clustering presents clearer segmentation of the main features
for the point cloud datasets.
\begin{figure}[t]
	\centering
	\subfigure[400 Samples.]{
		\label{r1}
		\includegraphics[height=2cm]{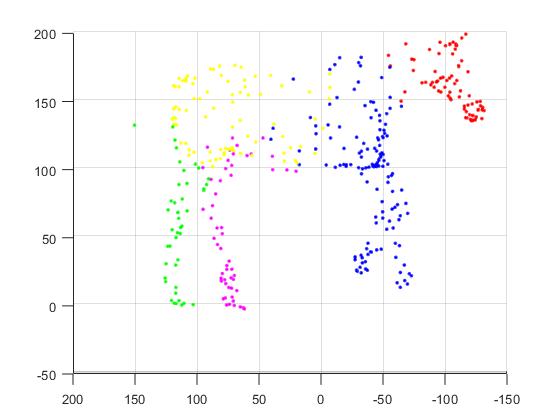}}
	\hfill
	\subfigure[1400 Samples.]{
		\label{r2}
		\includegraphics[height=2cm]{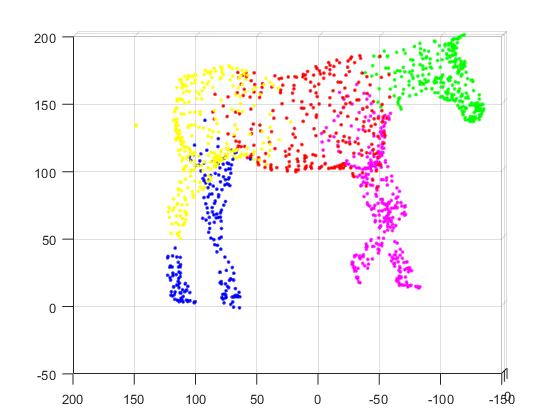}}
	\hfill
	\subfigure[3400 Samples.]{
		\label{r3}
		\includegraphics[height=2cm]{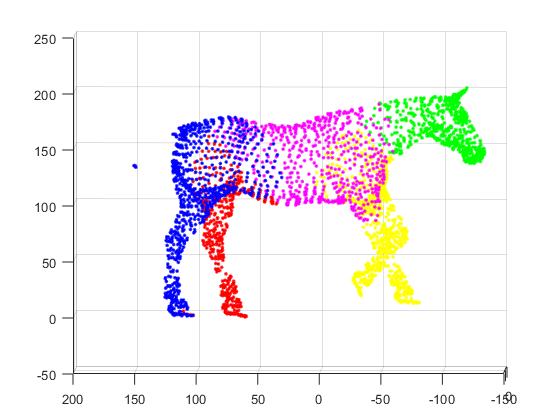}}
	\caption{HGSP Segmentation with Different Samples.}
	\label{rr1}
\end{figure}

\begin{figure}[t]
	\vspace{-0.0cm}
	\centering
	\subfigure[Clean Point Cloud.]{
		\label{s1}
		\includegraphics[height=2cm]{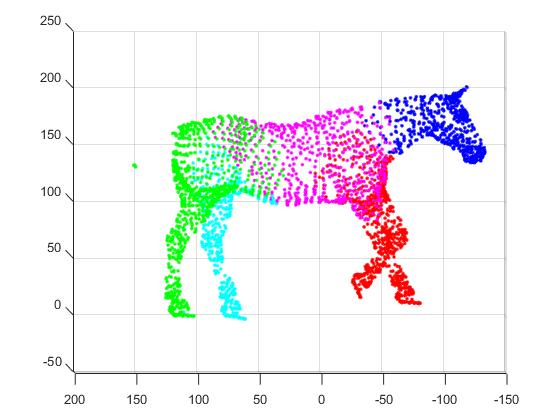}}
	\hfill
	\subfigure[With SNR=32 dB]{
		\label{ss2}
		\includegraphics[height=2cm]{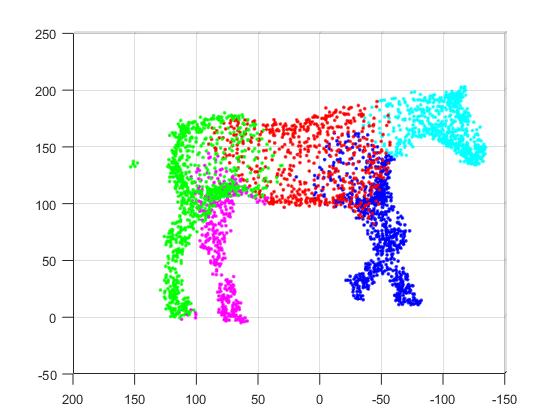}}
	\hfill
	\subfigure[With SNR=25 dB.]{
		\label{s3}
		\includegraphics[height=2cm]{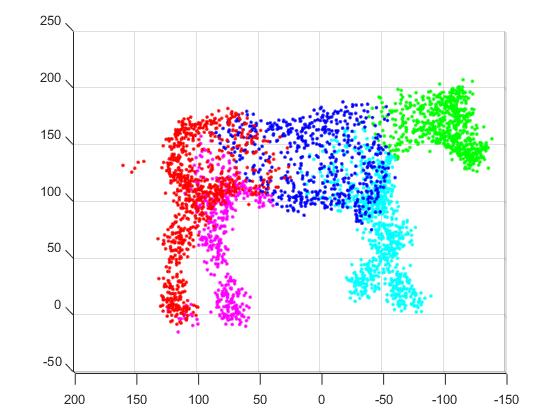}}
	\caption{HGSP Segmentation under Different Noise.}
	\label{sr1}
\end{figure}
\begin{table}[b]
	\scriptsize
	\caption{{Running Time of Different Methods (in Seconds)}}
	\begin{tabular}{|c|c|c|c|}
		\hline
		& Gorilla(2048 nodes)& Wolf(3400 nodes) & Cat(3400 nodes)  \\ \hline
		GSP       & 7.05                           & 24.982           & 24.812                      \\ \hline
		HGSP      & 2.771                          & 11.451           & 11.335                      \\ \hline
		Laplacian & 4.662                       & 15.579           & 15.773                     \\ \hline
		$k$-Means   & 0.016                            & 0.014            & 0.013                       \\ \hline
	\end{tabular}
	\\
	\centering
	\label{T_run}
\end{table}
\textbf{Numerical Comparison}: To provide comprehensive numerical comparison 
between different methods, we also compare the Silhouette index and mean 
accuracy of different methods in the ShapeNet Datasets \cite{c25,c26}. 
In the ShapeNet datasets, there are 16 categories of objects with labels 
in 2-6 classes. We test the average Silhouette and mean accuracy by 
randomly 
picking 50 point clouds from each category. 
The result is shown in Table \ref{t1}. From the result, 
we can see that the HGSP-based method provides the largest Silhouette indices
(indicating the best inner-cluster fitting) and the highest 
mean accuracy. Although these numerical results are valuable, 
larger mean accuracy does not necessarily imply better performance
in unsupervised clustering. 
For example, in Fig. \ref{plane1}, although the segmentation 
results differ from ground truth, these results still 
make sense by grouping two wings to different classes. 
Often, visualization can be a more suitable performance assessment.
Additional visualized results could be found in Fig. \ref{mw111}.

\begin{table}[t]
	\vspace{-4mm}
	\centering
	\caption{{Comparison in ShapeNet Datasets}}
	\vspace{-2mm}
	\begin{tabular}{|l|l|l|l|l|}
		\hline
		& HGSP                                      & GSP       & Laplacian & K-means    \\ \hline
		Silhouette & \textbf{0.56748}  & 0.25756 & 0.137381  & 0.55894    \\ \hline
		Accuracy   & \textbf{0.58928} & 0.55321 & 0.502275  & 0.57699 \\ \hline
	\end{tabular}
\label{t1}
\end{table}

\begin{figure}[t]
	\centering
	\subfigure[Ground Truth.]{
		\label{p1}
		\includegraphics[height=2cm]{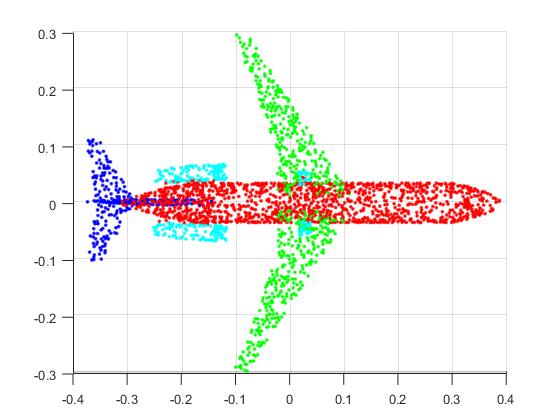}}
	\hspace{0.6in}
	\subfigure[HGSP Results.]{
		\label{p2}
		\includegraphics[height=2cm]{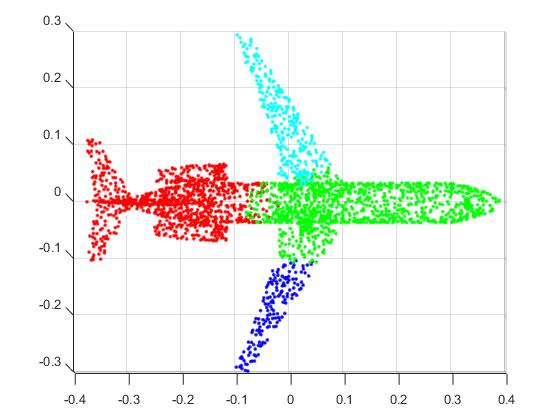}}
	\caption{Segmentation and Ground Truth.}
	\label{plane1}\vspace{-6mm}
\end{figure}


\begin{figure}[t]
	\centering
	\begin{minipage}[t]{4cm}
		\centering
		\includegraphics[width=1.3in]{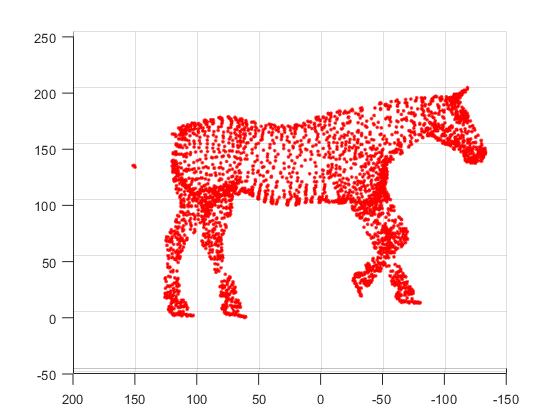}
		\caption{Horse Point Cloud.}
		\label{hor}
	\end{minipage}
	\hspace{0.1cm}
	\begin{minipage}[t]{4cm}
		\centering
		\includegraphics[width=1.3in]{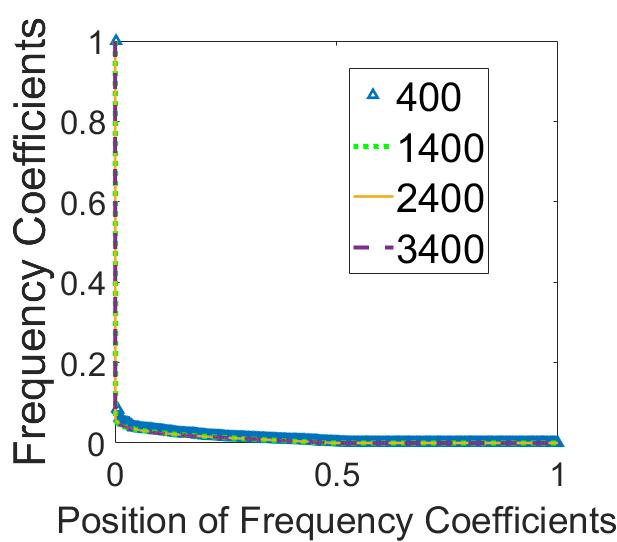}
		\caption{HGSP Coefficients.}
		\label{fhgsp}
	\end{minipage}
	
	\begin{minipage}[t]{4cm}
		\centering
		\includegraphics[width=1.3in]{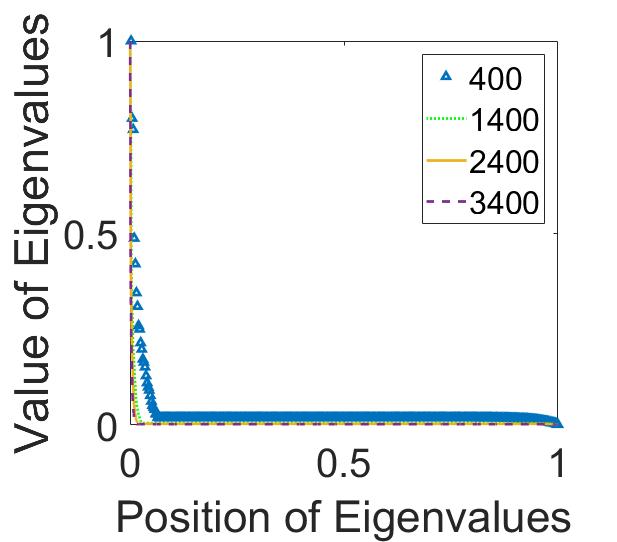}
		\caption{GSP Eigenvalues.}
		\label{fgsp}
	\end{minipage}
	\hspace{0.1cm}
	\begin{minipage}[t]{4cm}
		\centering
		\includegraphics[width=1.3in]{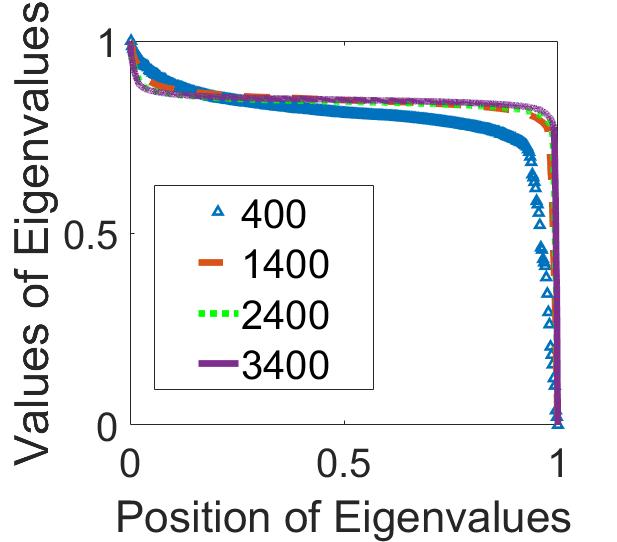}
		\caption{Laplacian Eigenvalues.}
		\label{flapla}
	\end{minipage}
\end{figure}
\textbf{Distribution of Eigenvalues}: We are interested in the reasons behind the performance differences of different graphical methods. 
To explore the reasons behind such differences, 
we examine the 
distributions of eigenvalues or the frequency coefficients of different methods 
in the specific horse point cloud shown in Fig. \ref{hor}. 
In different rounds of the experiment, we randomly sample 400, 1400, 2400 and 3400 points 
from the original horse point cloud and calculate the eigenvalues from different methods. 
The results are shown in Fig. \ref{fhgsp}, Fig. \ref{fgsp} and Fig. \ref{flapla}. 
The Y-axis is the normalized eigenvalues or frequency coefficients.
The X-axis is the eigenvalue order, i.e., $Pos_i=i/N$ for the $i$th 
eigenvalue of $N$ nodes. From the results, we can see that the 
HGSP-based method and GSP-based 
method have quite similar distributions, which indicate that their 
feature information is more 
concentrated in the first few key spectral components.{Moreover, the HGSP-based method delivers a sharper curve than the GSP-based method. As mentioned in \cite{c27}, a larger eigengap would lead to better clustering results, which should be responsible for the performance
difference between the HGSP-based and GSP-based methods.}
{Unlike adjacency-based methods, the distribution of eigenvalues of the 
Laplacian is rather different as shown in Fig. \ref{flapla}.} {In contrast to the Laplacian-based method, the HGSP-based method makes it easier to identify the vital information.
This difference in eigenvalue distribution can account for the performance
difference between the Laplacian-based segmentation and those based on adjacency.}

\textbf{Complexity and Robustness}:
We also test on datasets for different numbers of samples and noise effect. 
The results are shown in Fig. \ref{rr1} and Fig. \ref{sr1}. The 
HGSP spectral clustering remains robust for either noisy data or down-sampled data.
We compare the computation runtime of different methods 
over the animal datasets.  From results summarized in Table \ref{T_run}, 
it is not surprising
that the $k$-means method is the fastest, since graph-based methods 
require the additional step of spectrum estimation before clustering. 
The GSP-based and Laplacian-based methods require more computation, primarily because the computations needed to form the graph structure, whereas our proposed method 
directly estimates the HGSP spectral components.  In particular, we only require an approximate
distribution of the frequency coefficients to complete the segmentation task. Since the power of estimated coefficients is mainly concentrated in the first few spectral components shown as the optimized distribution in Fig. \ref{fhgsp}, a faster implementation can be done with the knowledge of the key estimated hypergraph spectra.

\section{Conclusion}
This work proposes a novel segmentation method for 3D point clouds 
based on hypergraph spectral clustering. We first estimate the hypergraph spectral space 
via hypergraph stationary processing
before ranking the spectral components according to their frequency coefficients. 
We further introduce a robust segmentation algorithm that utilizes the estimated 
hypergraph spectrum pairs. The test results over multiple point cloud datasets 
clearly demonstrate the advantages of the proposed method and the power of 
HGSP in 3D point clouds.

\begin{figure}[t]
	\centering
	\subfigure[HGSP]{
		\label{mw11}
		\includegraphics[height=13cm]{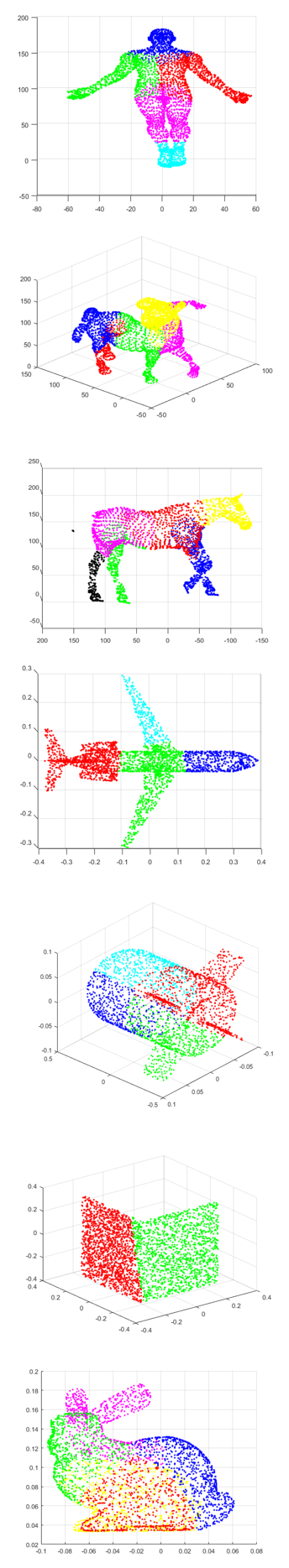}}
	\subfigure[GSP]{
		\label{mw21}
		\includegraphics[height=13cm]{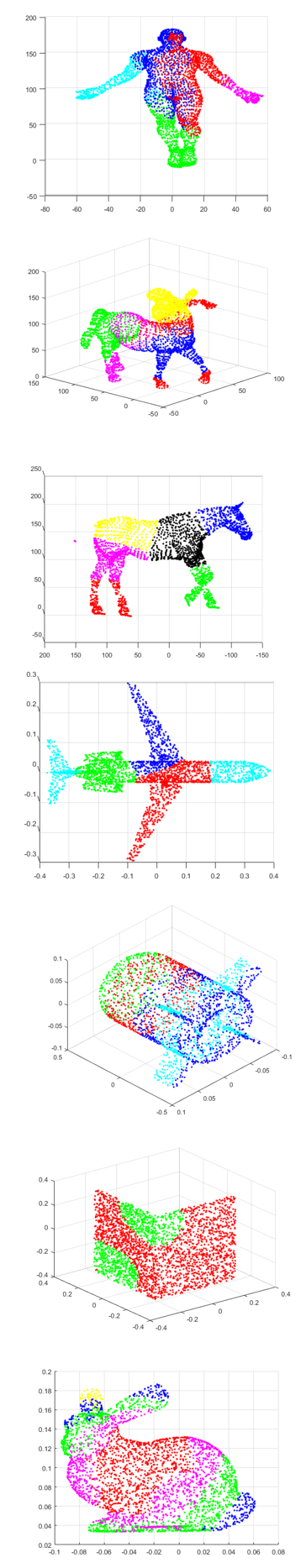}}
	\subfigure[Kmeans]{
		\label{mw41}
		\includegraphics[height=13cm]{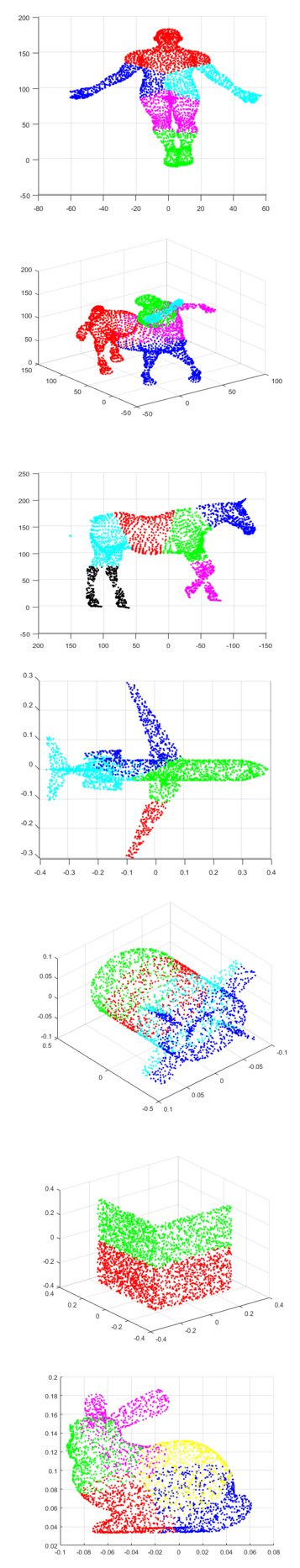}}
	\caption{Additional Results of Segmentation.}
	\label{mw111}
\end{figure}

\end{document}